\documentclass[twocolumn,fleqn,10pt]{wlscirep}
\usepackage[utf8]{inputenc}
\usepackage[T1]{fontenc}
\usepackage{color}
\usepackage{multirow}
\usepackage{pifont}
\usepackage[ruled,vlined]{algorithm2e}
\usepackage{pdfpages}
\SetKwInput{KwInput}{Input}
\SetKwInput{KwOutput}{Output}
\usepackage{amsthm}
\usepackage{mathrsfs}
\newcommand{\raggedr}{\leftskip=0pt \rightskip=0pt plus 0cm}

\newcommand{\revision}[1]{\textcolor[rgb]{0.00,0.00,0.00}{#1}}

\newtheorem{theorem}{Theorem}

\makeatletter
\newcommand{\removelatexerror}{\let\@latex@error\@gobble}
\makeatother

\def\framework{QuantumFlow}
\def\qfsnet{QF-Net}
\def\qfnet{QF-pNet}
\def\qfhnet{QF-hNe}
\def\qfsim{QF-FB}
\def\qfcirc{QF-Circ}
\def\qfmap{QF-Map}
\def\qfnc{-LY}
\def\qfbn{N-LYR}

\title{\vspace{25pt}A Co-Design Framework of Neural Networks and Quantum Circuits Towards Quantum Advantage}

\author[1]{Weiwen Jiang}
\author[2]{Jinjun Xiong}
\author[1]{Yiyu Shi}
\affil[1]{University of Notre Dame, Notre Dame, IN, 46556, USA}
\affil[2]{IBM Thomas J. Watson Research Center, Yorktown Heights, NY, 10598, USA}



\begin{abstract}

Despite the pursuit of quantum advantages in various applications, the power of quantum computers in machine learning (such as neural network models) has mostly remained unknown, primarily due to a missing link that effectively designs a neural network model suitable for quantum circuit implementation.
In this article, we present the first co-design framework, namely \framework, \revision{to provide such a} missing link. \framework~consists of novel quantum-friendly neural networks (\qfsnet{s}), an automatic mapping tool (\qfmap) to generate the quantum circuit (\qfcirc) for \qfsnet{s}, and an execution engine (\qfsim)
to efficiently support the training of \qfsnet{s} on a classical computer.
We discover that, in order to make full use of the strength of quantum representation, \revision{it is best to represent data in a neural network as either random variables or numbers in unitary matrices, such that they can be directly operated by the basic quantum logical gates.
Based on these data representations, we propose two quantum friendly neural networks, \qfnet~and \qfhnet{t} in \framework.
\qfnet~using random variables (i.e., the probabilistic model) has better flexibility, and can seamlessly connect two layers without measurement with more qbits and logical gates
than \qfhnet{t}.
On the other hand, \qfhnet{t} with unitary matrices  can encode $2^k$ data into $k$ qbits, and a novel algorithm can guarantee the cost complexity (i.e., logical gates) to be $O(k^2)$.
Compared to the cost of $O(2^k)$ in classical computing and the existing quantum implementations, \qfhnet{t} demonstrates the quantum advantages.} 
Evaluation results show that \qfnet~and \qfhnet{t} can achieve 97.10\% and 98.27\% accuracy, respectively, in distinguishing digits 3 and 6 in the widely used MNIST dataset, which are 14.55\% and
15.72\% higher than the state-of-the-art quantum-aware implementation.
\revision{Results further show that for input sizes of neural computation grow from 16 to 2,048, the cost reduction of \framework~increased from 2.4$\times$ to 64$\times$.
Furthermore, on MNIST dataset, \qfhnet{t} can achieve accuracy of 94.09\%, while the cost reduction against the classical computer reaches 10.85$\times$.}
Finally, a case study on a binary classification application is conducted. Running on IBM Quantum processor's ``ibmq\_essex'' backend, a neural network designed by \framework~can achieve 82\% accuracy.
To the best of our knowledge, \framework~is the first framework that co-designs the neural networks and quantum circuits\revision{, and the first work to demonstrate the potential quantum advantage on neural network computation.}

\end{abstract}
\begin{document}

\setlength{\textfloatsep}{7pt}
\setlength{\floatsep}{7pt}
\setlength{\dbltextfloatsep}{7pt}
\setlength{\abovecaptionskip}{5pt}

\flushbottom
\maketitle
%
%


\section*{Introduction}

\revision{Although quantum computers are expected to dramatically outperform classical computers, so far quantum advantages have only been shown in a limited number of applications, such as prime factorization\cite{shor1999polynomial} and sampling the output of random quantum circuits\cite{arute2019quantum}. In this work, we will demonstrate that quantum computers can achieve potential quantum advantage on neural network computation, a very common task in the prevalence of artificial intelligence (AI)\footnote{Quirk demos at \url{https://wjiang.nd.edu/categories/qf/}}.}

In the past decade, neural networks \cite{lecun2015deep,goodfellow2016deep,szegedy2015going} have become the mainstream machine learning models, and have achieved consistent success in numerous Artificial Intelligence (AI) applications, such as image classification \cite{krizhevsky2012imagenet,he2016deep,simonyan2014very,szegedy2016rethinking}, object detection \cite{lin2017feature,ren2015faster,he2017mask,ronneberger2015u}, and natural language processing \cite{young2018recent,sak2014long,vaswani2017attention}.
\revision{When the neural networks are applied to a specific field (e.g., AI in medical or AI in astronomy), the high-resolution input images bring new challenges. For example, one 3D-MRI image contains $224\times224\times10\approx5\times10^6$ pixels\cite{bernard2018deep} while one Square Kilometre Array (SKA) science data contains $32,768\times 32,768\approx1\times10^9$ pixels\cite{bonaldi2018square,lukic2020convosource}.
The large inputs greatly increase the computation in neural network\cite{xu2018scaling}, which gradually becomes the performance bottleneck.}
Among all computing platforms, the quantum computer is \revision{a} most promising one to address such challenges \cite{arute2019quantum,steffen2011quantum} as a quantum accelerator for neural networks \cite{schuld2015introduction,bertels2019quantum,cai2015entanglement}.
Unlike classical computers with $N$ digit bits to represent $1$ N-bit number at one time, quantum computers with $K$ qbits can represent $2^K$ 
numbers and manipulate them at the same time \cite{nielsen2002quantum}.
Recently, a quantum machine learning programming framework, TensorFlow Quantum, has been proposed \cite{broughton2020tensorflow}; however, how to exploit the power of quantum computing for neural networks is still remained unknown.


\begin{figure}[t]
\centering
\includegraphics[width=0.99\linewidth]{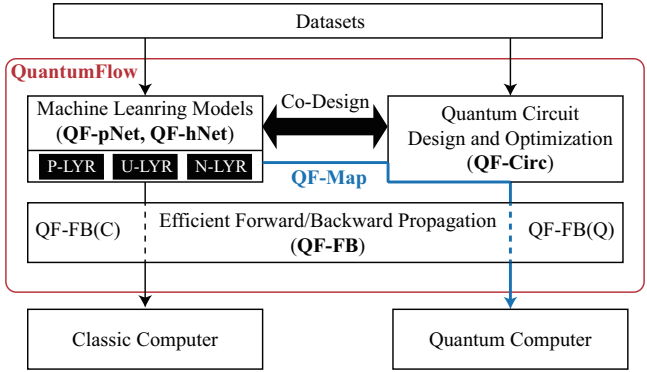}
\caption{\raggedr{\revision{\framework, an end-to-end co-design framework, provides a missing link between neural network and quantum circuit designs, which is composed of \qfsnet{s}, \qfhnet{t}, \qfsim, \qfcirc, \qfmap~that work collaboratively to design neural networks and their quantum implementations.}}}
\label{fig:stream}
\end{figure}

One of the most challenging obstacles to implementing neural network computation on a quantum computer is the missing link between the design of neural networks and that of quantum circuits. 
The existing works separately design them from two directions.
The first direction is to map the existing neural networks designed for classical computers to quantum circuits;
for instance, recent works\cite{tacchino2019artificial,tacchino2019quantum,rebentrost2018quantum,schuld2014quest} map McCulloch-Pitts (MCP) neurons \cite{mcculloch1943logical} onto quantum circuits.
Such an approach 
has difficulties in consistently mapping the trained model to quantum circuits.
For example, it needs a large number of qbits to realize the multiplication of real numbers.
To overcome this problem, some existing works\cite{tacchino2019artificial,tacchino2019quantum,rebentrost2018quantum,schuld2014quest} assume binary representation (i.e., ``-1'' and ``+1'') of activation, which cannot well represent data as seen in modern machine learning applications. 
\revision{This has also been demonstrated in work\cite{havlivcek2019supervised}, where data in the interval of $(0, 2\pi]$ instead of binary representation are mapped onto the Bloch sphere to achieve high accuracy for support vector machines (SVMs).}
In addition, some typical operations in neural networks cannot be implemented on quantum circuits, leading to inconsistency.
For example, to enable deep learning, batch normalization is a key step in a deep neural network to improve the training speed, model performance, and stability; however, 
directly conducting normalization on the output qbit (say normalizing the qbit with maximum probability to probability of 100\%) is equivalent to reset a qbit without measurement, which is simply impossible.
In consequence, batch normalization is not applied in the existing multi-layer network implementation\cite{tacchino2019quantum}.

The other direction is to design neural networks dedicated to quantum computers, like the tree tensor network (TTN)\cite{shi2006classical,grant2018hierarchical}.
Such an approach 
suffers from scalability problems.
More specifically, 
the effectiveness of neural networks is based on a trained model via the forward and backward propagation on large training sets.
However, it is too costly to directly train one network by applying thousands of times forward and backward propagation on quantum computers; in particular, there are limited available quantum computers for public access at the current stage.
An alternative way is to run a quantum simulator on a classical computer to train models, but the time complexity of quantum simulation is $O(2^m)$, where $m$ is the number of qbits.
This significantly restricts the trainable network size for quantum circuits.

To address all the above obstacles, it demands to take quantum circuit implementation into consideration when designing neural networks.
This paper proposes the first co-design framework, namely \framework, where \revision{five sub-components (\qfnet, \qfhnet{t}, \qfsim, \qfcirc, and \qfmap) work collaboratively to design neural networks and implement them to quantum computers, as shown in Figure \ref{fig:stream}}.


\revision{In \framework, the start point is the co-design of networks and quantum circuits.
We first propose \qfnet, which contains multiple neural computation layer, namely {P}\qfnc{R}.
In the design of {P}\qfnc{R}, we take full advantage of the ability of quantum logic gates to operate random variables represented by qbits.
Specifically, data in {P}\qfnc{R} are modeled as random variables following a two-point distribution, which is consistent to the expression of a qbit; computations in {P}\qfnc{R} can be easily implemented by the basic quantum logic gates.
Kindly note that {P}\qfnc{R}~can model both inputs and weights to be random variables.
But because binary weights can achieve comparable high accuracy for deep neural network applications \cite{courbariaux2015binaryconnect} and significantly reduce circuit complexity, we employ random variables for inputs only and binary values for weights in {P}\qfnc{R}.
Benefiting from the quantum-aware data interpretation for inputs, {P}\qfnc{R}~can be attached to the output qbits of previous layers without measurement; however, it utilizes $2^k$ qbits to represent $2^k$ input data items, and the computation needs at least one quantum gate for each qbit.
Therefore, it has high cost complexity.
}


\revision{Towards achieving the quantum advantage, we propose a hybrid network, namely \qfhnet{t}, which is composed of two types of neural computation layers: {P}\qfnc{R} and
{U}\qfnc{R}.
{U}\qfnc{R} is based on the unitary matrix, where $2^k$ input data are converted to a vector in the unitary matrix, such that all data can be represented by the amplitudes of states in a quantum circuit with $k$ qbits.
The reduction in input qbits provides the possibility to achieve quantum advantage; however, the state-of-the-art implementation\cite{tacchino2019artificial} using hypergraph state for computation still has the cost complexity of $O(2^k)$.
In this work, we devise a novel optimization algorithm to guarantee the cost complexity of {U}\qfnc{R} to be $O(k^2)$, which takes full use of the properties of neural networks and quantum logic gates.
Compared with the complexity of $O(2^k)$ on classical computing platforms, {U}\qfnc{R} demonstrates the quantum advantages of executing neural network computations.}



In addition to neural computation, \qfsnet{s}~also integrates a quantum-friendly batch normalization \qfbn, which can be plugged into both \qfnet~and \qfhnet{t}.
It includes additional parameters to normalize the output of a neuron, which are tuned during the training phase.

To support both the inference and training of \qfsnet{s}, we further develop \qfsim, a forward/backward propagation engine.
When \qfsim~is integrated into PyTorch to conduct inference and training of \qfsnet{s}~on classical computers, we denote it as \qfsim(C). 
\qfsim~can also be executed on a quantum computer or a quantum simulator. 
Based on Qiskit Aer simulator, we implement \qfsim(Q) for inference with or without error models. 






\begin{figure*}[t]
\centering
\includegraphics[width=1.0\linewidth]{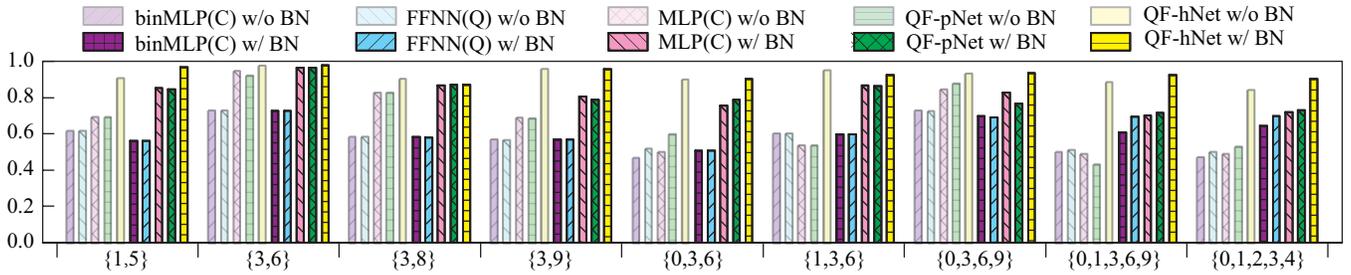}
\caption{\raggedr{\revision{\qfhnet{t}~achieves state-of-the-art accuracy in image classifications on different sub-datasets of MNIST.}}}
\label{fig:exp_model}
\end{figure*}

For each operation in \qfsnet{s}~(e.g., neural computations and batch normalization), a corresponding quantum circuit is designed in \qfcirc.
In neural computation, an encoder is involved to encode the inputs and weights.
The output will be sent to the batch normalization which involves additional control qbits to adjust the probability of a given qbit to be ranged from 0 to 1.
Based on \qfsnet{s}~and \qfcirc, \qfmap~is an automatic tool to conduct (1) network-to-circuit mapping (from \qfsnet{s}~to \qfcirc); (2) virtual-to-physic mapping (from virtual qbits in \qfcirc~to physic qbits in quantum processors).
Network-to-circuit mapping guarantees the consistency between \qfsnet{s}~and \qfcirc~with or without internal measurement; while virtual-to-physic mapping is based on Qiskit with the consideration of error rates.





As a whole, given a dataset, \framework~can design and train a quantum-friendly neural network and automatically generate the corresponding quantum circuit.
The proposed co-design framework is evaluated on the IBM Qikist Aer simulator and IBM Quantum Processors.

\section*{Results}

This section presents the evaluation results of all \revision{five sub-components} in \framework.
We first evaluate the effectiveness of \revision{\qfsnet{s} (i.e., \qfnet~and \qfhnet{t})} on the commonly used MNIST dataset \cite{lecun1998gradient} for the classification task.
Then, we show the consistency between \qfsim(C) on classical computers and \qfsim(Q) on the Qiskit Aer simulator.
\revision{Next, we show that \qfmap~is a key to achieve quantum advantage.}
We finally conduct an end-to-end case study on a binary classification test case on IBM quantum processors to test \qfcirc.

\subsection*{\qfsnet{s}~Achieve High Accuracy on MNIST}
Figure \ref{fig:exp_model} reports the results of different approaches for the classification of handwritten digits on the commonly used MNIST dataset \cite{lecun1998gradient}.
Results clearly show that with the same network structure (i.e., the same number of layers and the same number of neurons in each layer), the proposed \qfhnet{t} can achieve the highest accuracy than the existing models: \textit{(i)} multi-level perceptron (MLP) with binary weights for the classical computer, denoted as MLP(C);  \textit{(ii)} MLP with binary inputs and weights designed for the classical computer, denoted as binMLP(C); and \textit{(iii)} a state-of-the-art quantum-aware neural network with binary inputs and weights \cite{tacchino2019quantum}, denoted as FFNN(Q).  

\revision{Before reporting the detailed results, we first discuss the experimental setting. 
In this experiment, we extract sub-datasets from MNIST, which originally include 10 classes.
For instance, $\{3,6\}$ indicates the sub-datasets with two classes (i.e., digits 3 and 6), which are commonly used in quantum machine learning (e.g., Tensorflow-Quantum \cite{tfquantum}).
To evaluate the advantages of the proposed \qfsnet{s}, we further include more complicated sub-datasets, \{3,8\}, \{3,9\}, \{1,5\} for two classes.
In addition, we show that \qfsnet{s} can work well on larger datasets, including \{0,3,6\} and \{1,3,6\} for three classes, and \{0,3,6,9\}, \{0,1,3,6,9\}, \{0,1,2,3,4\} for four and five classes.
For the datasets with two or three classes, the original image is downsampled from the resolution of $28\times 28$ to $4\times 4$, while it is downsampled to $8\times 8$ for datasets with four or five classes.
All original images in MNIST and the downsampled images are with grey levels.
For all involved datasets, we employ a two-layer neural network, where the first layer contains 4 neurons for two-class datasets, 8 neurons for three-class datasets, and 16 neurons for four- and five-class datasets.
The second layer contains the same number of neurons as the number of classes in datasets.
Kindly note that theses architectures are manually tuned for higher accuracy, the neural architecture search (NAS) will be our future work.}

In the experiments, for each network, we have two implementations: one with batch normalization (w/ BN) and one without batch normalization (w/o BN).
Kindly note that FFNN\cite{tacchino2019quantum} does not consider batch normalization between layers. To show the benefits and generality of our newly proposed BN for improving the quantum circuits' accuracy, we add that same functionality to FFNN for comparison.
\revision{From the results in Figure \ref{fig:exp_model}, we can see that the proposed ``\qfhnet{t}~w/ BN'' (abbr. \qfhnet{t}\_BN) achieves the highest accuracy among all networks (even higher than MLP running on classical computers). 
Specifically, for the dataset of $\{3,6\}$, the accuracy of \qfhnet{t}\_BN is 98.27\%, achieving $3.01\%$ and $15.27\%$ accuracy gain against MLP(C) and FFNN(Q), respectively.
It even achieves a $1.17\%$ accuracy gain compared to \qfnet\_BN.
An interesting observation attained from this result is that with the increasing  number of classes in the dataset, \qfhnet{t}\_BN can maintain the accuracy to be larger than $90\%$, while other competitors suffer an accuracy loss.
Specifically, for dataset \{0,3,6\} (input resolution of $4\times 4$), \{0,3,6,9\} (input resolution of $8\times 8$), \{0,1,3,6,9\} (input resolution of $8\times 8$), the accuracy of \qfhnet{t}\_BN are 90.40\%, 93.63\% and 92.62\%; however, for MLP(C), these figures are 75.37\%, 82.89\%, and 70.19\%.
This is achieved by the hybrid use of two types of neural computation in \qfhnet{t} to better extract features in images.}
The above results validate that the proposed \qfhnet{t}~has a great potential in solving machine learning problems and our co-design framework is effective to design a quantum neural network with high accuracy.

\begin{table*}[t]
  \centering
  \scriptsize
  \tabcolsep 4.2pt
  \renewcommand\arraystretch{1.3}
  \caption{\raggedr{\revision{Inference accuracy and efficiency comparison between \qfsim(C)~and \qfsim(Q) on both \qfnet~and \qfhnet{t} using MNIST dataset to show the consistency of implementations of \qfsnet{s} on classical computers and quantum computers.}}}
    \begin{tabular}{ccccccccccccccc}
    \toprule
    \multirow{2}[4]{*}{} &  \multicolumn{7}{l}{\textbf{\qfnet}} & \multicolumn{7}{l}{\revision{\textbf{\qfhnet{t}}}}\\
    & \multicolumn{2}{c}{Qbits (Neurons)} & \multicolumn{3}{c}{Accuracy}  & \multicolumn{2}{c}{Elapsed CPU Time} & \multicolumn{2}{c}{\revision{Qbits (Neurons)}} & \multicolumn{3}{c}{\revision{Accuracy}}  & \multicolumn{2}{c}{\revision{Elapsed CPU Time}} \\
\cmidrule{1-15}       dataset   & L1    & L2    & \qfsim(C) & \qfsim(Q) & Diff. & \qfsim(C) & \qfsim(Q) &  \revision{L1}    & \revision{L2}    & \revision{\qfsim(C)} & \revision{\qfsim(Q)} & \revision{Diff.} & \revision{\qfsim(C)} & \revision{\qfsim(Q)} \\
    \midrule
    \{3,6\}   & 28(4) & 12(2) & 97.10\% & 95.53\% & -1.57\% & 5.13S  & 2,555H  &  \revision{7(4)}& \revision{5(2)}& \revision{98.27\%} & \revision{97.46\%} & \revision{-0.81\%} & \revision{4.30S} & \revision{16.57H}\\
    \{3,8\}   & 28(4) & 12(2) & 86.84\% & 83.59\% & -3.25\% & 5.59S  & 2,631H  &  \revision{7(4)}& \revision{5(2)}& \revision{87.40\%} & \revision{88.06\%} & \revision{+0.54\%} & \revision{4.05S} & \revision{16.56H}\\
    \{1,3,6\} & 28(8) & 18(3) & 87.91\% & 81.99\% & -5.92\% & 15.89S & 14,650H &  \revision{7(8)}& \revision{8(3)} & \revision{88.53\%}& \revision{88.14\%} & \revision{-0.39\%} & \revision{6.96S}& \revision{47.98H} \\
    \bottomrule
    \end{tabular}%
  \label{tab:mnist_simulation}%
\end{table*}%

Furthermore, we have an observation for our proposed batch normalization (BN). 
\revision{For almost all test cases, BN helps to improve the accuracy of \qfnet~and \qfhnet{t}, and
the most significant improvement
is observed
at dataset $\{1,5\}$, from less than 70\% to 84.56\% for \qfnet~and 90.33\% to 96.60\% for \qfhnet{t}.} 
Interestingly, BN also helps to improve MLP(C) accuracy significantly for dataset $\{1,3,6\}$ (from less than 60\% to 81.99\%), with a slight accuracy improvement for dataset $\{3,6\}$ and a slight accuracy drop for dataset $\{3,8\}$. 
This shows that the importance of batch normalization in improving model performance and the proposed BN is definitely useful for quantum neural networks.



\subsection*{\qfsim(C)~and \qfsim(Q) are Consistent}

Next, \revision{we evaluate the results of \qfsim(C) for both \qfnet~and \qfhnet{t} on classical computers, and that of \qfsim(Q) simulation on classical computers for the quantum circuits \qfcirc~built upon \qfsnet{s}.}
Table \ref{tab:mnist_simulation} reports the comparison results in \revision{the usage of qbits in \qfcirc, inference accuracy and elapsed time}, where results under Column \qfsim(C) are the golden results.
Because of the limitation of Qiskit Aer (whose backend is ``ibmq\_qasm\_simulator'') used in \qfsim(Q) that can maximally support 32 qbits, we measure the results after each neuron.
\revision{We select three datasets, including \{3,6\}, \{3,8\}, and \{1,3,6\}, for evaluation. 
Datasets with more classes (e.g., \{0,3,6,9\}) are based on larger inputs, which will lead to the usage of qbits in \qfnet~to exceed the limitation (i.e., 32 qbits).}
Specifically, for $4\times 4$ input image in \qfnet, in the first hidden layer, it needs 23 qbits (16 input qbits, 4 encoding qbits, and 3 auxiliary qbits) for neural computation and 4 qbits for batch normalization, and 1 output qbit; as a result, it requires 28 qbits in total.
\revision{On the contrary, since \qfhnet{t} is designed in terms of the quantum circuit implementation, which takes full use of all states of $k$ qbits to represent $2^k$ data.
In consequence, the number of required qbits can be significantly reduced. 
In detail, for the $4\times 4$ input, it needs 4 qbits to represent the data, 1 output qbit, and 2 auxiliary qbits; as a result, it only needs 7 qbits in total.}
The number of qbits used for each hidden layer (``L1'' and ``L2'') is reported in column ``Qbits'', where numbers in parenthesis indicate the number of neurons in a hidden layer.


Column ``Accuracy'' in Table \ref{tab:mnist_simulation} reports the accuracy comparison. 
\revision{For \qfsim(C), there will be no difference in accuracy among different executions. 
For \qfsim(Q), we implement the obtained \qfcirc~from \qfsnet{s} on Qiskit Aer simulation with 8,192 shots.
We have the following two observations from these results: (1) There exist accuracy differences between \qfsim(C) and \qfsim(Q). This is because Qiskit Aer simulation used in \qfsim(Q) is based on the Monte Carlo method, leading to the variation.
In addition, since the output probability of different neurons may quite close in some cases, it will easily result in different classification results for small variations.
(2) Such accuracy differences for \qfhnet{t} is much less than that of \qfnet, because \qfnet~utilizes much more qbits, which leads to the accumulation of errors. 
In \qfhnet{t}, we can see that there is a small difference between \qfsim(C) and \qfsim(Q).
For the dataset \{3,8\}, \qfsim(Q) can even achieve higher accuracy.}
The above results demonstrate both \qfnet~and \qfhnet{t} can be consistently implemented on classical and quantum computers.

\begin{figure}[t]
\centering
\includegraphics[width=1.0\linewidth]{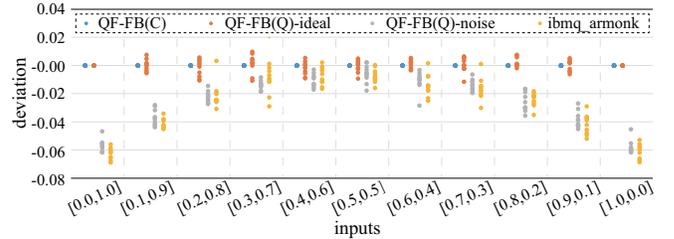}
\caption{\raggedr{Output probability comparison on \qfsim(C), \qfsim(Q)-ideal assuming perfect qbits, \qfsim(Q)-noise applying noise model for ``ibm\_armonk'' backend, and results of circuit design (``design 4'') in Figure \ref{fig:comp_verify}(d) on ``ibm\_armonk'' backend on IBM quantum processor.}}
\label{fig:exp_sim_comp}
\end{figure}

Column ``Elapsed Time'' in Table \ref{tab:mnist_simulation} demonstrates the efficiency of \qfsim.
The elapsed time is the inference time (i.e., forward propagation), used for executing all images in the test datasets, including 1968, 1983, and 3102 images for \{3,6\}, \{3,8\}, and \{1,3,6\}, respectively.
\revision{As we can see from the table, \qfsim(Q) for \qfnet~takes over 2,500 Hours for classifying 2 digits and 14,000 Hours for classifying 3 digits, and these figures are 16 Hours and 48 Hours for \qfhnet{t}.
On the other hand, \qfsim(C) only takes less than 16 seconds for both \qfsnet{s} on all datasets.
The speedup of \qfsim(C) over \qfsim(Q) is more than six orders of magnitude larger (i.e., $10^6\times$) for \qfnet, and more than four orders of magnitude larger (i.e., $10^4\times$) for \qfhnet{t}.}
This verifies that \revision{\qfsim(C)}~can provide an efficient forward propagation procedure to support the lengthy training of \qfnet.


In Figure \ref{fig:exp_sim_comp}, we further verify the accuracy of \qfsim~by conducting a comparison for design 4 in Figure \ref{fig:comp_verify}(d) on IBM quantum processor with ``ibm\_armonk'' backend.
Kindly note that the quantum processor backend is selected by \qfmap.
\revision{In this experiment, the result of \qfsim(C) is taken as a baseline. 
In the figure, the x-axis and y-axis represent the inputs and deviation, respectively.
The deviation indicates the difference between the baseline and the results obtained by Qiskit Aer simulation or that by executing on IBM quantum processor.}
For comparison, we involve two configurations for \qfsim(Q): (1) \qfsim(Q)-ideal assuming perfect qbits; (2) \qfsim(Q)-noise with error models derived from ``ibm\_armonk''.
We launch either simulation or execution for respective approaches for 10 times, each of which is represented by a dot in Figure \ref{fig:exp_sim_comp}.
We observe that the results of \qfsim(Q)-ideal are distributed around that generated by \qfsim(C) within 1\% deviation; while \qfsim(Q)-noise obtains similar results of that on the IBM quantum processor.
These results verify that the \qfsnet{s}~on the classical computer can achieve consistent results with that of \qfcirc~deployed on a quantum computer with perfect qbits.

\subsection*{\revision{\qfmap~is the Key to Achieve Quantum Advantage}}

\revision{Two sets of experiments are conducted to demonstrate the quantum advantage achieved by \framework.
First, we conduct an ablation study to compare the operator/gate usage of the core computation component, neural computation layer.
Then, the comparison on gate usage is further conducted on the trained neural networks for different sub-datasets from MNIST.
In these experiments, we compare \framework~to MLP(C) and FFNN(Q)\cite{tacchino2019quantum}.
For MLP(C), we consider the adder/multiplier as the basic operators, while for FFNN(Q) and \framework, we take the quantum logic gate (e.g., Pauli-X, Controlled Not, Toffoli) as the operators.
The operator usage reflects the total cycles for neural computation.
Kindly note that the results of \framework~are obtained by using \qfmap~on neural computation {U}\qfnc{R}; and that of FFNN(Q) are based on the state-of-the-art hypergraph state approach proposed in \cite{tacchino2019artificial}.
For a fair comparison, \framework~and FFNN(Q) are based on the same weights.}

\begin{figure}[t]
\centering
\includegraphics[width=1\linewidth]{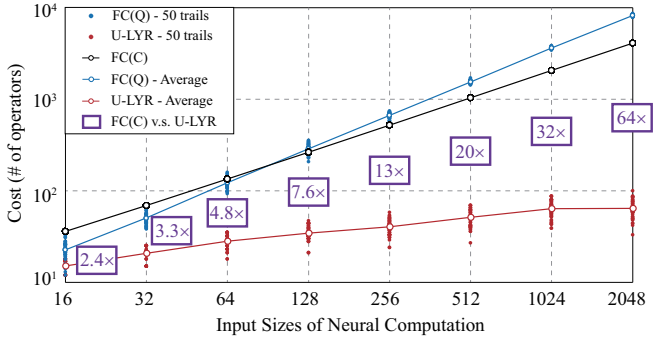}
\caption{\raggedr{\revision{Demonstration of Quantum Advantage Achieved by {U}\qfnc{R} in \framework: comparison is conducted by using 50 random generated weights for each input size.}}}
\label{fig:map_adv}
\end{figure}

\begin{table}[htbp]
  \centering
  \scriptsize
  \tabcolsep 1.2pt
  \renewcommand\arraystretch{1.3}
  \caption{\raggedr{\revision{\framework~demonstrates quantum advantages on neural networks for MNIST datasets with the increasing model sizes: comparison on the number of used gates.}}}
    \begin{tabular}{ccccccccccccccc}
    \toprule
    \multirow{2}[2]{*}{\revision{Dataset}} & \multicolumn{3}{c}{\revision{Structure}} & \multicolumn{3}{c}{\revision{MLP(C)}} & \multicolumn{4}{c}{\revision{FFNN(Q)}}  & \multicolumn{4}{c}{\revision{\textbf{\qfhnet{t}(Q)}}} \\
          & \revision{In} & \revision{L1} & \revision{L2} & \revision{L1} & \revision{L2} & \revision{Tot.} & \revision{L1} & \revision{L2} & \revision{Tot.} & \revision{\textbf{Red.}} & \revision{L1} & \revision{L2} & \revision{Tot.} & \revision{\textbf{Red.}} \\
    \midrule
    \revision{\{1,5\}} & \revision{16} & \revision{4} & \revision{2} & \multirow{4}[1]{*}{\revision{132}} & \multirow{4}[1]{*}{\revision{18}} & \multirow{4}[1]{*}{\revision{150}} & \revision{80} & \revision{38} & \revision{118} & \revision{\textbf{1.27$\times$}} & \revision{74} & \revision{38} & \revision{112} & \revision{\textbf{1.34$\times$}} \\
    \revision{\{3,6\}} & \revision{16} & \revision{4} & \revision{2} &       &       &       & \revision{96} & \revision{38} & \revision{134} & \revision{\textbf{1.12$\times$}} & \revision{58} & \revision{38} & \revision{96} & \revision{\textbf{1.56$\times$}} \\
    \revision{\{3,8\}} & \revision{16} & \revision{4} & \revision{2} &       &       &       & \revision{76} & \revision{34} & \revision{110} & \revision{\textbf{1.36$\times$}} & \revision{58} & \revision{34} & \revision{92} & \revision{\textbf{1.63$\times$}} \\
    \revision{\{3,9\}} & \revision{16} & \revision{4} & \revision{2} &       &       &       & \revision{98} & \revision{42} & \revision{140} & \revision{\textbf{1.07$\times$}} & \revision{68} & \revision{42} & \revision{110} & \revision{\textbf{1.36$\times$}} \\
    \revision{\{0,3,6\}} & \revision{16} & \revision{8} & \revision{3} & \multirow{2}[0]{*}{\revision{264}} & \multirow{2}[0]{*}{\revision{51}} & \multirow{2}[0]{*}{\revision{315}} & \revision{173} & \revision{175} & \revision{348} & \revision{\textbf{0.91$\times$}} & \revision{106} & \revision{175} & \revision{281} & \revision{\textbf{1.12$\times$}} \\
    \revision{\{1,3,6\}} & \revision{16} & \revision{8} & \revision{3} &       &       &       & \revision{209} & \revision{161} & \revision{370} & \revision{\textbf{0.85$\times$}} & \revision{139} & \revision{161} & \revision{300} & \revision{\textbf{1.05$\times$}} \\
    \revision{\{0,3,6,9\}} & \revision{64} & \revision{16} & \revision{4} & \revision{2064} & \revision{132} & \revision{2196} & \revision{1893} & \revision{572} & \revision{2465} & \revision{\textbf{0.89$\times$}} & \revision{434} & \revision{572} & \revision{1006} & \revision{\textbf{2.18$\times$}} \\
    \revision{\{0,1,3,6,9\}} & \revision{64} & \revision{16} & \revision{5} & \multirow{2}[0]{*}{\revision{2064}} & \multirow{2}[0]{*}{\revision{165}} & \multirow{2}[0]{*}{\revision{2229}} & \revision{1809} & \revision{645} & \revision{2454} & \revision{\textbf{0.91$\times$}} & \revision{437} & \revision{645} & \revision{1082} & \revision{\textbf{2.06$\times$}} \\
    \revision{\{0,1,2,3,4\}} & \revision{64} & \revision{16} & \revision{5} &       &       &       & \revision{1677} & \revision{669} & \revision{2346} & \revision{\textbf{0.95$\times$}} & \revision{445} & \revision{669} & \revision{1114} & \revision{\textbf{2.00$\times$}} \\
    \revision{\{0,1,3,6,9\}$^*$} & \revision{256} & \revision{8} & \revision{5} & \revision{4104} & \revision{85} & \revision{4189} & \revision{5030} & \revision{251} & \revision{5281} & \revision{\textbf{0.79$\times$}} & \revision{135} & \revision{251} & \revision{386} & \revision{\textbf{10.85$\times$}} \\
    \midrule
    \multicolumn{15}{l}{\revision{$^*$: Model with $16\times 16$ resolution input for dataset \{0,1,3,6,9\} to test scalability, whose}} \\
    \multicolumn{15}{l}{\revision{accuracy is 94.09\%, which is higher than $8\times 8$ input with accuracy of 92.62\%.}} \\
    \bottomrule
    \end{tabular}%
  \label{tab:qfmap}%
\end{table}%

\revision{Figure \ref{fig:map_adv} reports the comparison results for the core component in neural network, the neural computation layer.
The x-axis represents the input size of the neural computation, and the y-axis stands for the cost, that is, the number of operators used in the corresponding design.
For quantum implementation (both FC(Q)\cite{tacchino2019artificial} in FFNN(Q)\cite{tacchino2019quantum} and {U}\qfnc{R} in \framework), the value of weights will affect the gate usage, so we generate 50 sets of weights for each scale of input, and the dots on the lines in this figure represent average cost. 
From this figure, it clearly shows that the cost of FC(C) in MLP(C) on classical computing platforms grows exponentially along with the increase of inputs.
The state-of-the-art quantum implementation FC(Q) has the similar exponentially growing trend.
On the other hand, we can see that the growing trend of {U}\qfnc{R} is much slower.
As a result, the cost reduction continuously increases along with the growth of the input size of neural computation.
For the input size of 16 and 32, the average cost reductions are 2.4$\times$ and 3.3$\times$, compared with the implementations on classical computers.
When the input size grows to 2,048, the cost reduction increased to 64$\times$ on average.
The cost reduction trends in this figure clearly demonstrate the quantum advantage achieved by {U}\qfnc{R}.
In the \textbf{Methods} section, for the neural computation with an input size of $2^k$, we will show that the complexity for quantum implementation is $O(k^2)$, while it is $O(2^k)$ for classical computers.}



\revision{Table \ref{tab:qfmap} reports the comparison results for the whole network. 
The neural network models for MNIST in Figure \ref{fig:exp_model} are deployed to quantum circuits to get the cost.
In addition, to demonstrate the scalability, we further include a new model for dataset ``\{0,1,3,6,9\}$^{*}$'', which takes the larger sized inputs but less neurons in the first layer $L1$ and having higher accuracy over ``\{0,1,3,6,9\}''.
In this table, columns $L1$, $L2$, and $Tot.$ under three approaches report the number of gates used in the first and second layers, and in the whole network.
Columns ``Red.'' represent the comparison with baseline $MLP(C)$.}

\revision{From the table, it is clear to see that all cases implemented by \qfhnet{t} can achieve cost reduction over MLP(C), while for datasets with more than 3 classes, FFNN(Q) needs more gates than MLP(C).
A further observation made in the results is that \qfhnet{t} can achieve higher cost reduction with the increase of input size.
Specifically, for input size is 16, the reduction ranges from $1.05\times$ to $1.63\times$. The reduction increases to $2.18\times$ for input size is 64, and it continuously increases to $10.85\times$ when the input size grows to 256.
The above results are consistent with the results shown in Figure \ref{fig:map_adv}.
It further indicates that even the second layer in \qfhnet{t} uses the {P}\qfnc{R} which requires more gates for implementation, the quantum advantage can still be achieved for the whole network because the first layer using {U}\qfnc{R} can significantly reduce the number of gates.
Above all, \framework~demonstrates the quantum advantages on MNIST dataset. 
}


\subsection*{\qfcirc~on IBM Quantum Processor}

This subsection further evaluates the efficacy of \framework~on IBM Quantum Processors.
We first show the importance of quantum circuit optimization in \qfcirc~ to minimize the number of required qbits.
Based on the optimized circuit design, we then deploy a 2-input binary classifier on IBM quantum processors.

Figure \ref{fig:comp_verify} demonstrates the optimization of a 2-input neuron step by step.
All quantum circuits in Figures \ref{fig:comp_verify}(a)-(d) achieve the same functionality, but with a different number of required qbits.
The equivalency of all designs will be demonstrated in the \textbf{Supplementary Information}.
Design 1 in Figure \ref{fig:comp_verify}(a) is directly derived from the design methodology presented in  \textbf{Methods} section.
To optimize the circuit using fewer qbits, we first convert it to the circuit in Figure \ref{fig:comp_verify}(b), denoted as design 2.
Since there is only one controlled-Z gate from qbit I0 to qbit E/O, we can merge these two qbits, and obtain an optimized design in Figure \ref{fig:comp_verify}(c) with 2 qbits, denoted as design 3.
The circuit can be further optimized to use 1 qbit, as shown in Figure \ref{fig:comp_verify}(d), denoted as design 4.
The function $f$ in design 4 is defined as follows:
\begin{equation}\label{equ:funcf}
f(\alpha,\beta)=2\cdot arcsin(\sqrt{x+y-2\cdot x\cdot y}),
\end{equation}
where $x=sin^2\frac{\alpha}{2}$, $y=sin^2\frac{\beta}{2}$, representing input probabilities.

To compare these designs, we deploy them onto IBM Quantum Processors, where ``ibm\_velencia'' backend is selected by \qfmap.
In the experiments, we use the results from \qfsim(C)~as the golden results.
\revision{Figure \ref{fig:comp_verify}(e) reports the deviations of design 1 and design 4 against the golden results.
The results clearly show that design 4 is more robust because it uses fewer qbits in the circuit.}
Specifically, the deviation of design 4 against golden results is always less than 5\%, while reaching up to 13\% for design 1.
In the following experiments, design 4 is applied in \qfcirc.

\begin{figure}[t]
\centering
\includegraphics[width=1\linewidth]{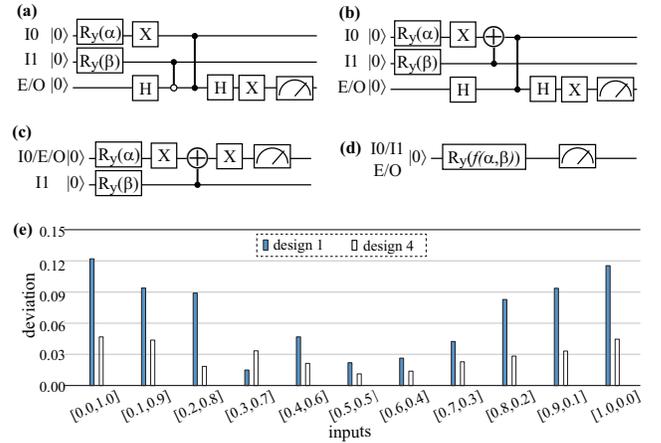}
\caption{\raggedr{Evaluation of the quantum circuits for a two-input neural computation, where weights are \{-1,+1\}: (a) design 1: original neural computation design; (b-d) three optimized designs (design 2-4), based on design 1; \revision{(e) the deviation of design 1 and design 4 obtained from ``ibm\_velencia'' backend IBM quantum processor, using \qfsim(C)~as golden results.}}}
\label{fig:comp_verify}
\end{figure}

\begin{figure*}[t]
\centering
\includegraphics[width=0.95\linewidth]{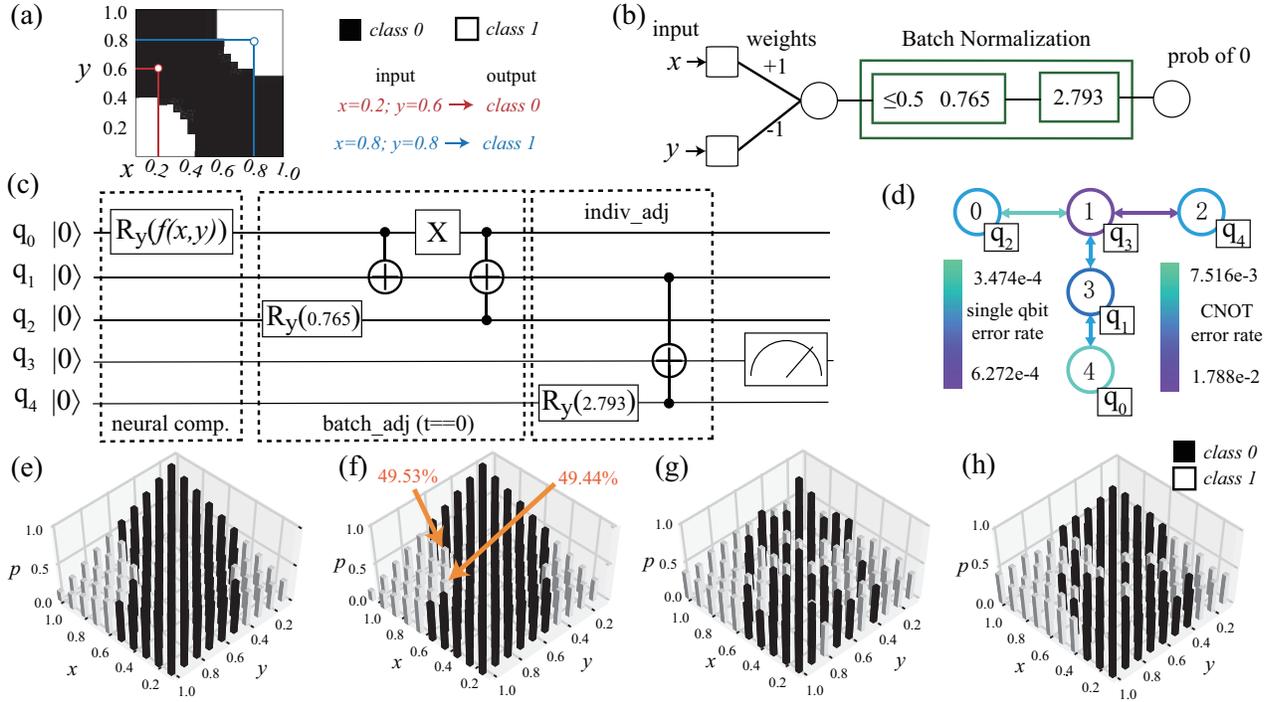}
\caption{\raggedr{Results of a binary classification case study on IBM quantum processor of ``ibmq\_essex'' backend: (a) binary classification with two inputs ``$x$'' and ``$y$''; (b) \qfsnet{s}~with trained parameters; (c) \qfcirc~derived from the trained \qfsnet{s}; (d) the virtual-to-physic mapping obtained by \qfmap~upon ``ibmq\_essex'' quantum processor; (e) \qfsim(C) achieves 100\% accuracy; (f) \qfsim(Q) achieves 98\% accuracy where 2 marked error cases having probability deviation within 0.6\% ; (g) results on ``ibmq\_essex'' using the default mapping, achieving 68\% accuracy; (h) results obtained by ``ibmq\_essex'' with the mapping in (d), achieving 82\% accuracy; shots number in all tests is set as 8,192.}}
\label{fig:exp_class}
\end{figure*}

Next, we are ready to introduce the case study on an end-to-end binary classification problem as shown in Figure \ref{fig:exp_class}.
In this case study, we train the \qfnet~based on \qfsim(C).
Then, the tuned parameters are applied to generate \qfcirc.
Finally, \qfmap~optimizes the deployment of \qfcirc~to IBM quantum processor, selecting the ``ibmq\_essex'' backend.

The classification problem is illustrated in Figure \ref{fig:exp_class}(a), which is a binary classification problem (two classes) with two inputs: $x$ and $y$. 
For instance, if $x=0.2$ and $y=0.6$, it indicates class 0.
The \qfnet, \qfcirc, and \qfmap~are demonstrated in Figure \ref{fig:exp_class}(b)-(d).
First, Figure \ref{fig:exp_class}(b) shows that \qfnet~consists of one hidden layer with one 2-input neuron and batch normalization.
The output is the probability $p_0$ of class 0. Specifically, an input is recognized as class 0 if $p_0\ge 0.5$; otherwise it is identified as class 1.

The quantum circuit \qfcirc~of the above \qfnet~is shown in Figure \ref{fig:exp_class}(c).
The circuit is composed of three parts, (1) neural computation, (2) batch\_adj in batch normalization, and (3) indiv\_adj in batch normalization.
The neural computation is based on design 4  as shown in Figure \ref{fig:comp_verify}(d).
The parameter of $Ry$ gate in neural computation at qbit $q_0$ is determined by the inputs $x$ and $y$.
Specifically, $f(x,y) = 2\cdot arcsin(\sqrt{x+y-2\cdot x\cdot y})$, as shown in Formula \ref{equ:funcf}.
Then, batch normalization is implemented in two steps, where qbits $q_2$ and $q_4$ are initialized according to the trained BN parameters.
During the process, $q_1$ holds the intermediate results after batch\_adj, and $q_3$ holds the final results after indiv\_adj.
Finally, we measure the output on qbit $q_3$\footnote{A Quirk-based example of inputs 0.2 and 0.6 leading to $f(x,y)=1.6910$ can be accessed by \url{https://wjiang.nd.edu/quirk_0_2_0_6.html}, which is accessible at 06-19-2020. The output probability of 60.3\% is larger than 50\%, implying the inputs belong to class 0.}.


After building \qfcirc, the next step is to map qbits from the designed circuit to the physic qbits on the quantum processor, and this is achieved through our \qfmap.
In this experiment, \qfmap~selects ``ibm\_essex'' as backend with its physical properties shown in Figure \ref{fig:exp_class}(d), where error rates of each qbit and each connection are illustrated by different colors.
By following the rules as defined by \qfmap~(see Method section), we obtain the physically mapped \qfcirc~ shown in Figure \ref{fig:exp_class}(d). For example,  the input $q_0$ is mapped to the physical qbit labeled as 4.


After \framework~goes through all the steps from input data to the physic quantum processor, we can perform inference on the quantum computer. 
In this experiments, we test 100 combinations of inputs from $\langle x,y\rangle = \langle 0.1,0.1\rangle$ to $\langle x,y\rangle = \langle 1.0,1.0\rangle$.
First, we obtain the results using \qfsim(C) as golden results~and \qfsim(Q) as quantum simulation assuming perfect qbits, which are reported in Figure \ref{fig:exp_class}(e) and (f), achieving 100\% and 98\% prediction accuracy.
The results verify the correctness of the proposed \qfnet.
Second, the results obtained on quantum processors are shown in Figure \ref{fig:exp_class}(h), which achieves 82\% accuracy in prediction.
For comparison, in Figure \ref{fig:exp_class}(g), we also show the results obtained by using the default mapping algorithm in IBM Qiskit, whose accuracy is only 68\%.
This result demonstrates the value of \qfmap~in further improving the physically achievable accuracy on a physical quantum processor with errors.

\section*{Discussion}
In summary, we propose a holistic  \framework~framework to co-design the neural networks and quantum circuits.
\revision{Novel quantum-aware \qfsnet{s}~are first designed}. 
Then, an accurate and efficient inference engine, \qfsim, is proposed to enable the training of \qfsnet{s}~on classical computers.
\revision{Based on \qfsnet{s}~and the training results, the \qfmap~can automatically generate and optimize a corresponding quantum circuit, \qfcirc}.
Finally, \qfmap~can further map \qfcirc~to a quantum processor in terms of qbits' error rates.

\begin{table}[t]
  \centering
  \scriptsize
  \tabcolsep 3pt
  \renewcommand\arraystretch{1.3}
  \caption{\raggedr{\revision{Comparison of the implementation of Neural Computation with $m=2^k$ input neurons.}}}
    \begin{tabular}{cccccc}
    \toprule
    \multicolumn{2}{c}{\revision{Layers}} & \revision{FC(C)\cite{rosenblatt1957perceptron}} & \revision{FC(Q)\cite{tacchino2019quantum}} & \revision{{P}\qfnc{R}} & \revision{{U}\qfnc{R}} \\
    \midrule
    \multirow{2}[1]{*}{\revision{Complexity}} & \revision{\# Bits/Qbits} & \revision{$O(2^k)$} & \revision{$O(k)$} & \revision{$O(2^k)$} & \revision{\textbf{$O(k)$}} \\
          & \revision{\# Operators} & \revision{$O(2^k)$} & \revision{$O(2^k)$} & \revision{$O(k\cdot 2^k)$} & \revision{\textbf{$O(k^2)$}} \\
    \multirow{2}[0]{*}{\revision{Data Representation}} & \revision{Input Data} & \revision{F32} & \revision{Bin} & \revision{R.V.} & \revision{F32} \\
          & \revision{Weights} & \revision{Bin (F32)} & \revision{Bin} & \revision{\textbf{Bin (R.V.)}} & \revision{Bin} \\
    \multicolumn{2}{c}{\revision{Connect Layers w/o Measurement}}  & \revision{\checkmark} & \revision{-} & \revision{\checkmark} & \revision{$\times$} \\
    \midrule
    \multirow{2}[1]{*}{\revision{Summary}} & \revision{Flexibility} & \revision{-} & \revision{$\times$} & \revision{\textbf{\checkmark}} & \revision{$\times$} \\
          & \revision{Qu. Adv.} & \revision{-} & \revision{$\times$} & \revision{$\times$} & \revision{\textbf{\checkmark}} \\
    \bottomrule
    \end{tabular}%
  \label{tab:nccomp}%
\end{table}%

The neural computation layer is one key component in \framework~to achieve state-of-the-art accuracy and quantum advantage. 
We have shown in Figure \ref{fig:exp_model} that the existing quantum-aware neural network \cite{tacchino2019quantum} that interprets inputs as the binary form will degrade the network accuracy.
\revision{To address this problem, in \qfnet, we first propose a probability-based neural computation layer, denoted as {P}\qfnc{R}, which interprets real number inputs as random variables following a two-point distribution.
As shown in Table \ref{tab:nccomp}, {P}\qfnc{R} can represent both input and weight data using random variables, and it can 
directly connect layers without measurement.
In summary, {P}\qfnc{R} provides better flexibility to perform neural computation than others; however, it suffers high complexity, i.e., $O(2^k)$ for the usage of qbits and $O(k\times2^k)$ for the usage of operators (basic quantum gates).}

\revision{In order to acquire quantum advantages, we further propose a unitary matrix based neural computation layer, called {U}\qfnc{R}.
As illustrated in Table \ref{tab:nccomp}, {U}\qfnc{R} sacrifices some degree of flexibility on data representation and non-linear function but can significantly reduce the circuit complexity.
Specifically, with the help of \qfmap, the number of basic operators used by {U}\qfnc{R} can be reduced from $O(2^k)$ to $O(k^2)$, compared to FC(C) and FC(Q).
Kindly note that this work does not take the cost of inputs encoding into consideration in demonstrating quantum advantage; instead, we focus on the speedup of the commonly used computation component layer, that is, the neural computation layer.
The cost of encoding inputs can be reduced to $O(1)$ by preprocessing data and storing them into quantum memory, or approximating the quantum states by using basic gate (e.g., Ry).
For neural computation, we demonstrated that {U}\qfnc{R} can successfully achieve quantum advantage in the next section.}


\revision{Batch normalization is another key technique in improving accuracy, since the backbone of the quantum-friendly neuron computation layers ({P}\qfnc{R} and {U}\qfnc{R}) is similar to that in classical computers, using both linear and non-linear functions.}
This can be seen from the results in Figure \ref{fig:exp_model}.
\revision{Batch normalization can achieve better model accuracy, mainly because the data passing a nonlinear function $y^2$ will lead to outputs to be significantly shrunken to a small range around 0 for real number representation and $1/m$ for a two-point distribution representation, where $m$ is the number of inputs.}
Unlike straightforwardly doing normalization on classical computers, it is non-trivial to normalize a set of qbits.
Innovations are made in \framework~for a quantum-friendly normalization.

\revision{The philosophy of co-design is demonstrated in the design of {P}\qfnc{R}, {U}\qfnc{R}, and \qfbn.
From the neural network design, we take the known operations as the backbones in {P}\qfnc{R}, {U}\qfnc{R}, and \qfbn; while from the quantum circuit design, we take full use of its ability in processing probabilistic computation and unitary matrix based computations to make {P}\qfnc{R}, {U}\qfnc{R}, and \qfbn~quantum-friendly.
In addition, as will be shown in the next section, the key to achieve quantum advantage for {U}\qfnc{R} is that \qfmap~fully considers the flexibility of the neural networks (i.e., the order of inputs can be changed), while the requirement of continuously executing machine learning algorithms on the quantum computer leads to a hybrid neural network, \qfhnet{t}, with both neural computation operations: {P}\qfnc{R} and {U}\qfnc{R}.
Without the co-design, the previous works did not exploit quantum advantages in implementing neural networks on quantum computers, which reflects the importance of conducting co-design.} 

We have experimentally tested \framework~on a 32-qbit Qiskit Aer simulator and a 5-qbit IBM quantum processor based on superconducting technology.
We show that the proposed quantum oriented neural networks \qfsnet{s} can obtain state-of-the-art accuracy on the MNIST dataset.
It can even outperform the conventional model on a similar scale for the classical computer.
For the experiments on IBM quantum processors, we demonstrate that, even with the high error rates of the current quantum processor, \qfsnet{s}~can be applied to classification tasks with high accuracy.

In order to accelerate the \qfsim~on classical computers to support training, we 
make the assumptions that the perfect qbits are used. 
This enables us to apply theoretic formulations to accelerate the simulation process; however, it leads to some error in predicting
the outputs of its corresponding deployment on a physical quantum processor with high error rates
(such as the current IBM quantum processor with error rates in the
range of $10^{-2}$).
However, we do not deem
this as a drawback of our approach, rather this is an inherent problem of the current physical implementation of quantum processors. As the error rates get smaller in the future, it will help to narrow the gap between what \qfsnet{s}~predicts and what quantum processor delivers.
With the innovations on reducing the error rate of physic qbits, \qfsnet{s} will achieve better results.

\begin{figure}[t]
\centering
\includegraphics[width=1.0\linewidth]{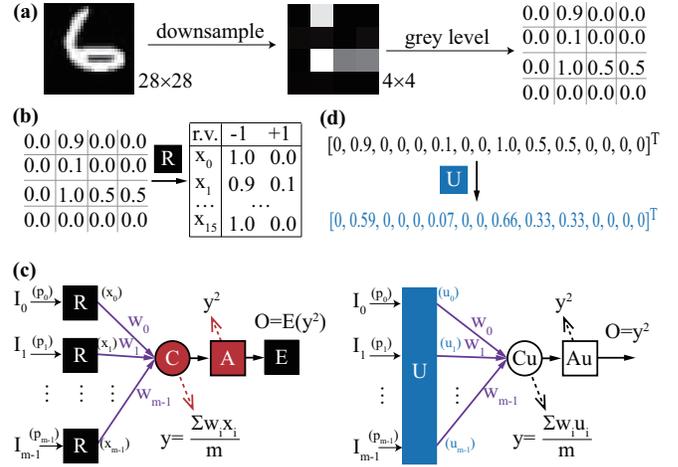}
\caption{\raggedr{\revision{Neural Computation: (a) prepossessing of inputs by \textit{i)} down-sampling the original $28\times 28$ image in MNIST to $4\times 4$ image and \textit{ii)} get the $4\times 4$ matrix with grey level normalized to $[0,1]$.
(b-c) {P}\qfnc{R}: (b) input data are converted from real number to a random variable following a two-point distribution; (c) four operations in {P}\qfnc{R}, \textit{i)} \textit{R:} converting a real number ranging from 0 to 1 to a random variable, \textit{ii)} \textit{C:} average sum of weighted inputs, \textit{iii)} \textit{A:} non-linear activation function, \textit{iv)} \textit{E:} converting random variable to a real number.
(d-e) {U}\qfnc{R}: (d) $m$ input data are converted to a vector in the first column of a $m\times m$ unitary matrix; (e) three operations in {U}\qfnc{R}, \textit{i)} \textit{U:} unitary matrix converter, \textit{ii)} \textit{$C_u$:} average sum of weighted inputs; \textit{iii)} \textit{$A_u$:} non-linear activation function. 
}}}
\label{fig:method_qf_net}
\end{figure}


\section*{Methods}

We are going to introduce \framework~in this section.
Neural computation and batch normalization are two key components in a neural network, and we will present the design and implementation of these two components in \qfsnet{s}, \qfsim, \qfcirc, and \qfmap, respectively.

\subsection*{\revision{\qfnet~and \qfhnet{t}}}

\revision{Figure \ref{fig:method_qf_net} demonstrates two different neural computation components in \framework: {P}\qfnc{R} and {U}\qfnc{R}.
As stated in the \textbf{Discussion} section, {P}\qfnc{R} and {U}\qfnc{R} have their different features.
Before introducing these two components, we demonstrate the common prepossessing step in Figure \ref{fig:method_qf_net}(a), which goes through the downsampling and grey level normalization to obtain a matrix with values in the range of 0 to 1.
With the prepossessed data, we will discuss the details of each component in the following texts.}


\begin{figure}[t]
\centering
\includegraphics[width=0.7778\linewidth]{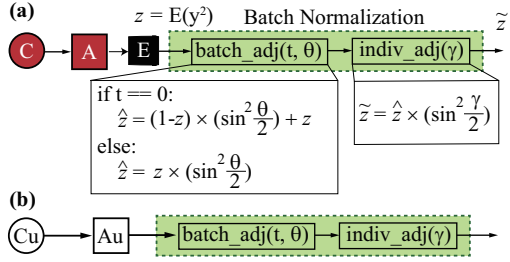}
\caption{\raggedr{\revision{Quantum implementation aware batch normalization: (a) connected to {P}\qfnc{R}; (b) connected to {U}\qfnc{R}.
}}}
\label{fig:method_qf_bn}
\end{figure}

\revision{\textbf{Neural Computation {P}\qfnc{R}:}} An m-input neural computation component is illustrated in \ref{fig:method_qf_net}(d), where $m$ input data $I_0,I_1,\cdots,I_{m-1}$ and $m$ corresponding weights $w_0,w_1,\cdots,w_{m-1}$ are given.
Input data $I_i$ is a real number ranging from 0 to 1, while weight $w_i$ is a $\{-1,+1\}$ binary number.
Neural computation in {P}\qfnc{R}~is composed of 4 operations: \textit{i)} \underline{\textbf{R}}: this operation converts a real number $p_k$ of input $I_k$ to a two-point distributed random variable $x_k$, where $P\{x_k=-1\}=p_k$ and $P\{x_k=+1\}=1-p_k$,
as shown in \ref{fig:method_qf_net}(b).
For example, we treat the input $I_0$'s
real value of $p_0$ as the probability of $x_0$
that outcomes $-1$ while $q_0=1-p_0$ as the probability that outcomes $+1$.
\textit{ii)} \underline{\textbf{C}}: this operation calculates $y$ as the average sum of weighted inputs, where the weighted input is the product of a converted input (say $x_k$) and its corresponding weight (i.e., $w_k$).
Since $x_k$ is a two-point random variable, whose values are $-1$ and $+1$ and the weights are binary values of $-1$ and $+1$, if $w_k=-1$, $w_k\cdot x_k$ will lead to the swap of probabilities $P\{x_k=-1\}$ and $P\{x_k=+1\}$ in $x_k$.
\textit{iii)} \underline{\textbf{A}}: we consider the quadratic function as the non-linear activation function in this work, and $A$ operation outputs $y^2$ where $y$ is a random variable.
\textit{iv)} \underline{\textbf{E}}: this operation converts the random variable $y^2$ to 0-1 real number by taking its expectation.
It will be passed to batch normalization to be further used as the input to the next layer.

\begin{figure*}[t]
\centering
    \includegraphics[width=1.0\linewidth]{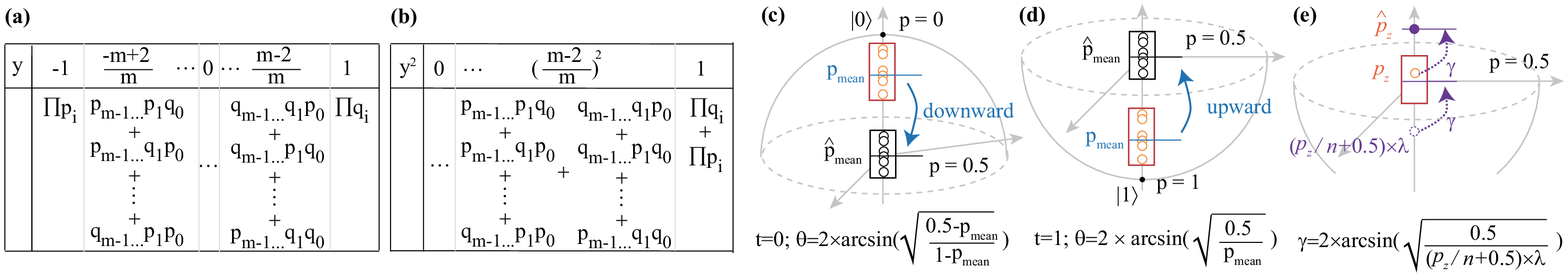}
\caption{\raggedr{\qfsim: (a-b) distribution of random variable $y$ and $y^2$ in neural computation component of \qfnet; (c-e) determination of parameters, $t$, $\theta$, and $\gamma$, in batch normalization component of \qfsnet{s}.}}
\label{fig:method_qf_sim}
\end{figure*}

\revision{\textbf{Neural Computation {U}\qfnc{R}}: Unlike {P}\qfnc{R} taking advantage of the probabilistic 
properties of qbits to provide the maximum flexibility, {U}\qfnc{R} aims to minimize the gates for quantum advantage using the property of the unitary matrix.
The $2^k$ input data are first converted to $2^k$ corresponding data that can be the first column of a unitary matrix, as shown in Figure \ref{fig:method_qf_net}(d).
Then the linear function $C_u$ and activation quadratic function $A_u$ are conducted.
{U}\qfnc{R} has the potential to significantly reduce the quantum gates for computation, since the $2^k$ inputs are the first column in a unitary matrix and can be encoded to $k$ qbits.
But the state-of-the-art hypergraph based approach\cite{tacchino2019artificial} needs $O(2^k)$ basic quantum gates to encode $2^k$ corresponding weights to $k$ qbits, which is the same with that of classical computer needing $O(2^k)$ operators (i.e., adder/multiplier).
In the later section of \qfmap, we propose an algorithm to guarantee that the number of used basic quantum gates to be $O(k^2)$, achieving quantum advantages.}

\revision{
\textbf{Multiple Layers:} 
{P}\qfnc{R} and {U}\qfnc{R} are the fundamental components in \qfsnet{s}, which may have multiple layers.
In terms of how a network is composed using these two components, we present two kinds of neural networks: \qfnet~and \qfhnet{t}.
\qfnet~is composed of multiple layers of {P}\qfnc{R}.
For its quantum implementation, operations on random variables can be directly operated on qbits.
Therefore, \underline{\textbf{R}} operation is only conducted in the first layer. 
Then, \underline{\textbf{C}} and \underline{\textbf{A}} operations will be repeated without measurement.
Finally, at the last layer, we measure the probability for output qbits, which is corresponding to the \underline{\textbf{E}} operation.
On the other hand, \qfhnet{t} is composed of both {U}\qfnc{R} and {P}\qfnc{R}, where the first layer applies {U}\qfnc{R} with the converted inputs. 
The output of {U}\qfnc{R} is directly represented by the probability form on a qbit, and it can seamlessly connect to \underline{\textbf{C}} in {P}\qfnc{R} used in later layers.}


\textbf{Batch Normalization}: \revision{Figure \ref{fig:method_qf_bn} illustrates the proposed batch normalization (\qfbn) component.
It can take the output of either {P}\qfnc{R} or {U}\qfnc{R} as input.}
\qfbn~is composed of two sub-components: batch adjustment (``batch\_adj'') and individual adjustment (``indiv\_adj'').
Basically, batch\_adj is proposed to 
avoid data to be continuously shrunken to a small range (as stated in \textbf{Discussion} section).
This is achieved by normalizing the probability mean of a batch of outputs to 0.5 at the training phase, as shown in Figure \ref{fig:method_qf_sim}(c)-(d).
In the inference phase, the output $\hat{z}$ can be computed as follows: 
\begin{equation}\label{equ:batch_adj}
\begin{aligned}
\hat{z}&=(1-z)\times(sin^2\frac{\theta}{2})+z,\ \ \ \ if\ t=0 \\  \hat{z}&=z \times(sin^2\frac{\theta}{2}),\ \ \ \ \ \ \ \ \ \ \ \ \ \ \ \ \ \ \ if\ t=1
\end{aligned}
\end{equation}

After batch\_adj, the outputs of all neurons are normalized around 0.5.
In order to increase the variety of different neurons' output for better classification, indiv\_adj is proposed.
It contains a trainable parameter $\lambda$ and a parameter $\gamma$ (see Figure \ref{fig:method_qf_sim}(e)).
\revision{It is performed in two steps: (1) we get a start point of an output $p_z$ according to $\lambda$, and then moves it back to p=0.5 to obtain parameter $\gamma$; (2) we move $p_z$ the angle of $\gamma$ to obtain the final output. 
Since different neurons have different values of $\lambda$, the variation of outputs can be obtained.}
In the inference phase, its output $\tilde{z}$ can be calculated as follows.
\begin{equation}\label{equ:indi_adj}
\tilde{z}=\hat{z}\times(sin^2\frac{\gamma}{2})
\end{equation}
The determination of parameters $t$, $\theta$, and $\gamma$ is conducted in the training phase, which will be introduced later in \qfsim.




\subsection*{\qfsim}

\qfsim~involves both forward propagation and backward propagation. 
In forward propagation, all weights and parameters are determined, and we can conduct neural computation and batch normalization layer by layer.
\revision{For \qfnc$_{p}$, the neural computation will compute $y=\frac{\sum\nolimits_{\forall i}\{x_i\times w_i\}}{m}$ and  $y^2$, where $x_i$ is a two-point random variable.}
The distributions of $y$ and $y^2$ are illustrated in 
Figure \ref{fig:method_qf_sim}(a)-(b).
It is straightforward to get the expectation of $y^2$ by using the distribution; however, for $m$ inputs, it involves $2^m$ terms (e.g., $\prod q_i$ is one term), and leads to the time complexity to be $O(2^m)$. 
To reduce the time complexity, \qfsim~takes advantage of independence of inputs to calculate the expectation as follows:
\begin{equation}
\begin{aligned}
    E([\sum\nolimits_{\forall i}{w_ix_i}]^2)&=E(\sum\nolimits_{\forall i}[w_ix_i]^2+2\times\sum\nolimits_{\forall i}\sum\nolimits_{\forall j>i}[w_ix_iw_jx_j])\\
    &=m+2\times\sum\nolimits_{\forall i}\sum\nolimits_{\forall j>i}{E(w_ix_i)\times E(w_jx_j)}
\end{aligned}
\end{equation}
where $E(\sum\nolimits_{\forall i}[w_ix_i]^2)=m$, since $[w_ix_i]^2=1$ and there are $m$ inputs in total. 
The above formula derives the following algorithm with time complexity of $O(m^2)$ to simulate the neural computation {P}\qfnc{R}.

\begin{figure*}[t]
\centering
\includegraphics[width=1\linewidth]{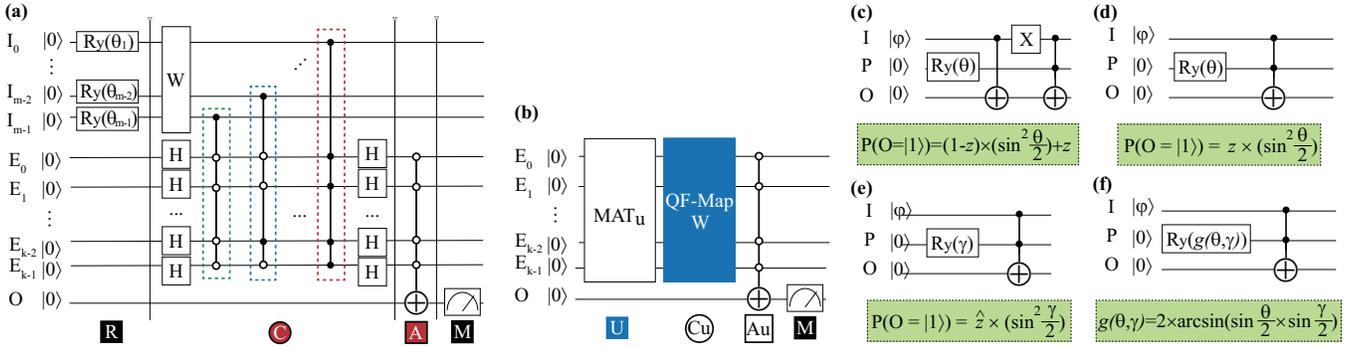}
\caption{\raggedr{\qfcirc: (a) quantum circuit designs for \qfnet; \revision{(b) quantum circuit design for \qfhnet{t}}; (c-f): batch normalization, quantum circuit designs for different cases; (c) design of ``batch\_adj'' for the case of $t=0$; (d) design of ``batch\_adj'' for the case of $t=1$; (e) design of ``indiv\_adj''; (f) optimized design for a specific case when $t=1$ in ``batch\_adj''.}}
\label{fig:method_qf_circ}
\end{figure*}

\noindent\begin{minipage}{0.49\textwidth}
\removelatexerror
\begin{algorithm}[H]
\footnotesize
\KwInput{(1) number of inputs $m$; (2) $m$ probabilities $\langle p_0,\cdots,p_{m-1}\rangle$; (3) $m$ weights $\langle w_0,\cdots,w_{m-1}\rangle$.}
\KwOutput{expectation of $y^2$}
 1. Expectation of random variable $x_i$: $e_i=E(x_i)=1-2\times p_i$;\\
 2. Expectation of $w_i\times x_i$: $E(w_i\times x_i)=w_i\times e_i$;\\
 3. Sum of pair product $sum_{pp} = \sum\nolimits_{\forall i}\sum\nolimits_{\forall j> i}\{E(w_i\times x_i)\times E(w_j\times x_j)\}$;\\
 4. Expectation of $y^2$: $E(y^2) = \frac{m+2\times sum_{pp}}{m^2}$;\\
 5. Return $E(y^2)$;
 \caption{\qfsim: simulating {P}\qfnc{R}}
\end{algorithm}
\end{minipage}

\revision{For \qfnc$_{u}$, the neural computation will first convert inputs $I=\{i_0,i_1,\cdots,i_{m-1}\}$ to a vector $U=\{u_0,u_1,\cdots,u_{m-1}\}$ who can be the first column of a unitary matrix $MAT_u$. 
By operating $MAT_u$ on $K=log_2m$ qbits with initial state (i.e., $|0\rangle$), we can encode $U$ to $2^K=m$ states. 
The generating of unitary matrix $MAT_u$ is equivalent to the problem of identifying the nearest orthogonal matrix given a square matrix $A$.
Here, matrix $A$ is created by using $I$ as the first column, and $0$ for all other elements.
Then, we apply Singular Value Decomposition (SVD) to obtain $ B\sum C^*=SVD(A)$, and we can obtain $MAT_u=BC^*$.
Based on the obtained vector $U$ in $MAT_u$, \qfnc$_{u}$ computes $y=\frac{\sum\nolimits_{\forall i}\{u_i\times w_i\}}{m}$ and $y^2$, as shown in the following algorithm.
}

\noindent\begin{minipage}{0.49\textwidth}
\removelatexerror
\begin{algorithm}[H]
\footnotesize
\KwInput{(1) number of inputs $m$; (2) $m$ input values $\langle p_0,\cdots,p_{m-1}\rangle$; (3) $m$ weights $\langle w_0,\cdots,w_{m-1}\rangle$.}
\KwOutput{$[\frac{\sum\nolimits_{\forall i}\{u_i\times w_i\}}{m}]^2$}
 1. Generating square matrix A and compute $B\sum C^*=SVD(A)$\\
 2. Calculating $MAT_u=BC^*$ and extract vector $U$ from $MAT_u$ ;\\
 3. Compute $y=\frac{\sum\nolimits_{\forall i}\{u_i\times w_i\}}{m}$;\\
 4. Return $y^2$;
 \caption{\revision{\qfsim: simulating {U}\qfnc{R}}}
\end{algorithm}
\end{minipage}

The forward propagation for batch normalization can be efficiently implemented based on the output of the neural computation.
A code snippet is given as follows.

\noindent\begin{minipage}{0.49\textwidth}
\removelatexerror
\begin{algorithm}[H]
\footnotesize
\KwInput{(1) $E(y^2)$ from neural computation; (2) parameters $t$, $\theta$, $\gamma$ determined by training procedure.}
\KwOutput{normalized output $\tilde{z}$}
 1. Initialize $z$: $z=E(y^2)$;\\
 2. Calculate $\hat{z}$ according to Formula \ref{equ:batch_adj};\\
 3. Calculate $\tilde{z}$ according to Formula \ref{equ:indi_adj};\\
 4. Return $\tilde{z}$;
 \caption{\qfsim: simulating \qfbn}
\end{algorithm}
\end{minipage}

For the backward propagation, we need to determine weights and parameters (e.g., $\theta$ in \qfbn). 
The typically used optimization method (e.g., stochastic gradient descent \cite{bottou2010large}) is applied to determine weights.
In the following, we will discuss the determination of \qfbn parameters $t$, $\theta$, $\gamma$.

The batch\_adj sub-component involves two parameters, $t$ and $\theta$.
During the training phase, a batch of outputs are generated for each neuron.
Details are demonstrated in Figure \ref{fig:method_qf_sim}(c)-(d) with 6 outputs. In terms of the mean of outputs in a batch $p_{mean}$, there are two possible cases: (1)  $p_{mean}\le 0.5$ and (2) $p_{mean}>0.5$.
For the first case, $t$ is set to 0 and $\theta=2\times arcsin(\sqrt{\frac{0.5-p_{mean}}{1-p_{mean}}})$ can be derived from Formula \ref{equ:batch_adj} by setting $\hat{z}$ to 0.5; similarly, for the second case, $t$ is set to 1 and $\theta=2\times arcsin(\sqrt{\frac{0.5}{p_{mean}}})$.
Kindly note that the training procedure will be conducted in multiple iterations of batches.
As with the method for batch normalization in the conventional neural network, we employ \textit{moving average} to record parameters.
Let $x_i$ be the parameter of $x$ (e.g., $\theta$) at the $i^{th}$ iteration,
and $x_{cur}$ be the value obtained in the current iteration.
For $x_{i}$, it can be calculated as $x_{i}=m\times x_{i-1}+ (1-m)\times x_{cur}$, where $m$ is the momentum which is set to $0.1$ by default in the experiments.

In forward propagation, the sub-module indiv\_adj is almost the same with batch\_adj for $t=0$; however, the determination of its parameter $\gamma$ is slightly different from $\theta$ for batch\_adj.
As shown in Figure \ref{fig:method_qf_sim}(e), the initial probability of $\hat{z}$ after batch\_adj is $p_z$.
The basic idea of indiv\_adj is to move $\hat{z}$ by an angle, $\gamma$.
It will be conducted in three steps: (1) we move start point at $p_z$ to point $A$ with the probability of $(p_z/n+0.5)\times \lambda$, where $n$ is the batch size and $\lambda$ is a trainable variable; (2) we obtain $\gamma$ by moving point $A$ to $p=0.5$; (3) we finally move solution at $p_z$ by the angle of $\gamma$ to obtain the final result.
By replacing $P_{mean}$ by $(p_z/n+0.5)\times \lambda$ in batch\_adj when $t=1$, we can calculate $\gamma$.
For each batch, we calculate the mean of $\gamma$, and we also employ the \textit{moving average} to record $\gamma$.

\subsection*{\qfcirc}

We now discuss the corresponding circuit design for components in \qfsnet{s}, including {P}\qfnc{R}, {U}\qfnc{R}, and \qfbn.
Figures \ref{fig:method_qf_circ}(a)-(b) demonstrate the circuit design for {P}\qfnc{R} (see Figure \ref{fig:method_qf_net}(c)) and {U}\qfnc{R} (see Figure \ref{fig:method_qf_net}(e)), respectively; Figures \ref{fig:method_qf_circ}(c)-(f) demonstrate the \qfbn~in Figure \ref{fig:method_qf_bn}.

\revision{\textbf{Implementing {P}\qfnc{R} on quantum circuit:}}
For an $m$-input neural computation, the quantum circuit for {P}\qfnc{R} is composed of $m$ input qbits (I), and $k=log_2m$ encoding qbits (E), and 1 output qbit (O).

\revision{In accordance with the operations in {P}\qfnc{R}, the circuit is composed of four parts.
In the first part, the circuit is initialized to perform \underline{\textbf{R}} operation.
For qbits $I$, we apply $m$ Ry gates with parameter $\theta=2\times arcsin(\sqrt{p_k})$ to initialize the input qbit $I_k$ in terms of the input real value $p_k$, such that the state of $I_k$ is changed from $|0\rangle$ to $\sqrt{q_k}|0\rangle+\sqrt{p_k}|1\rangle$.
For encoding qbits $E$ and output qbit $O$, they are initialized as $|0\rangle$.
The second part completes the average sum function, i.e., \underline{\textbf{C}} operation. 
It further includes three steps: (1) dot product of inputs and weights on qibits $I$, (2) make encoding qbits $E$ into superposition, (3) encode $m$ probabilities in qbits $I$ to $2^k=m$ states in qbits $E$.
The third part implements the quadratic activation function, that is the \underline{\textbf{A}} operation. It applies the control gate to extract the amplitudes in states $|I_0I_1\cdots I_{m-1}\rangle\otimes|00\cdots0\rangle$ to qbit $O$.
As we know that the probability is the square of the amplitude, the quadratic activation function can be naturally implemented.
Finally, \underline{\textbf{E}} operations  corresponds to the fourth part that measures qbit $O$ to obtain the output real number $E(y^2)$, where the state of $O$ is $|O\rangle=\sqrt{1-E(y^2)}|0\rangle+\sqrt{E(y^2)}|1\rangle$.
A detailed demonstration of the equivalency between \qfcirc~and {P}\qfnc{R}~can be found in the \textbf{Supplementary Information}.}

Kindly note that for a multi-layer network composed of {P}\qfnc{R}, namely \qfnet, there is no need to have a measurement at interfaces, because the converting operation $R$ initializes a qbit to the state exactly the same with $|O\rangle$.
In addition, the batch normalization can also take $|O\rangle$ as input.

\revision{\textbf{Implementing {U}\qfnc{R} on quantum circuit:}}
\revision{
For an $m$-input neural computation, the quantum circuit for {U}\qfnc{R} contains $k=log_2m$ encoding qbits $E$ and 1 output qbit $O$.}

\revision{According to {U}\qfnc{R}, the circuit in turn performs \underline{\textbf{U}}, \underline{\textbf{$C_u$}}, \underline{\textbf{$A_u$}} operations, and finally obtains the result by a measurement.
In the first operation, unlike the circuit for {P}\qfnc{R} using $R$ gate to initialize circuits using $m$ qbits; for {U}\qfnc{R}, we using the matrix $MAT_u$ to initialize circuits on $k=log_2m$ qbits $E$.
Recalling that the first column of $MAT_u$ is vector $V$, after this step, $m$ elements in vector $V$ will be encoded to $2^k=m$ states represented by qbits $E$.
The second operation is to perform the dot product between all states in qbits $E$ and weights $W$, which is implemented by control Z gates and will be introduced in \qfmap.
Finally, like the circuit for {P}\qfnc{R}, the quadratic activation and measurement are implemented.
Kindly note that, in addition to quadratic activation, we can also implement higher orders of non-linearity by duplicating the circuit to perform \underline{\textbf{U}}, \underline{\textbf{$C_u$}}, and \underline{\textbf{$A_u$}} to achieve multiple outputs.
Then, we can use control NOT gate on the outputs to achieve higher orders of non-linearity.
For example, using a Toffoli gate on two outputs can realize $y^4$.
Let the non-linear function be $y^k$ and the cost complexity of {U}\qfnc{R} using quadratic activation be $O(N)$, then the cost complexity of {U}\qfnc{R} using $y^k$ as the non-linear function will be $O(kN)$.}

\revision{For neural networks with given inputs, we can preprocess the \underline{\textbf{U}} operation and store the states in quantum memory \cite{lvovsky2009optical}.
Thus, the key for quantum advantage is to exponentially reduce the number of gates used in neural computation, compared with the number of basic operators used in classical computing.
We will present an algorithm in \qfmap~for {U}\qfnc{R} to achieve this goal.}

\revision{\textbf{Implementing \qfbn~on quantum circuit:}}
Now, we discuss the implementation of \qfbn~in quantum circuits.
In these circuits, three qbits are involved: (1) qbit $I$ for input, which can be the output of qbit $O$ in circuit without measurement, or initialized using a Ry gate according to the measurement of qbit $O$ in circuit; (2) qbit $P$ conveys the parameter, which is obtained via training procedure, see details in \qfsim; (3) output qbits $O$, which can be directly used for the next layer or be measured to convert to a real number.

Figures \ref{fig:method_qf_circ}(b)-(c) show the circuit design for two cases in batch\_adj.
Since parameters in batch\_adj are determined in the inference phase, if $t=0$, we will adopt the circuit in Figure \ref{fig:method_qf_circ}(b), otherwise, we adopt that in Figure \ref{fig:method_qf_circ}(c).
Then, Figure \ref{fig:method_qf_circ}(d) shows the circuit for indiv\_adj.
We can see that circuits in Figures \ref{fig:method_qf_circ}(c) and (d) are the same except the initialization of parameters, $\theta$ and $\gamma$.
For circuit optimization, we can merge the above two circuits into one by changing the input parameters to $g(\theta,\gamma)$, as shown in Figure \ref{fig:method_qf_circ}(e).
In this circuit, $\tilde{z}^{\prime}=z\times sin^2\frac{g(\theta,\gamma)}{2}$, while for applying circuits in Figures \ref{fig:method_qf_circ}(c) and (d), we will have $\tilde{z}=z\times sin^2\frac{\theta}{2}\times sin^2\frac{\gamma}{2}$.
To guarantee the consistent function, we can derive that $g(\theta,\gamma)=2\times arcsin(sin\frac{\theta}{2}\times sin\frac{\gamma}{2})$.

\subsection*{\qfmap}

\qfmap~is an automatic tool to map \qfsnet{s}~to the quantum processor through two steps: network-to-circuit mapping, which maps \qfsnet{s} to \qfcirc; and virtual-to-physic mapping, which maps \qfcirc~to physic qbits.

\revision{\textbf{Mapping \qfsnet{s} to \qfcirc:}
The first step of \qfmap~is to map three kinds of layers (i.e., {P}\qfnc{R}, {U}\qfnc{R}, and \qfbn) to \qfcirc.
The mappings of {P}\qfnc{R} and \qfbn~are straightforward.
Specifically, for {P}\qfnc{R} in Figure \ref{fig:method_qf_circ}(a), the circuit for weight $W$ is determined using the following rule: for a qbit $I_k$, an $X$ gate is placed if and only if $W_k=-1$.
Let the probability $P(x_k=-1)=P(I_k=|0\rangle)=q_k$, after the $X$ gate, the probability becomes $1-q_k$.
Since the values of random variable $x_k$ are $-1$ and $+1$, such an operation computes $-x_k$.
For \qfbn, let $O_1$ be the output qbit of the first layer. It can be directly connected to the qbit $I$ in Figure \ref{fig:method_qf_circ}(c)-(f), according to the type of batch normalization, which is determined by the training phase.}

\revision{
The mapping of {U}\qfnc{R} to quantum circuits is the key to achieve quantum advantages. 
In the following texts, we will first formulate the problem, and then introduce the proposed algorithm to guarantee the cost for a neural computation with $2^k$ inputs to be $O(k^2)$.}

\revision{
Before formally introducing the problem, we first give some fundamental definitions that will be used.
We first define the quantum state and the relationship between states as follows.
Let the computational basis be composed of $k$ qbits, as in Figure \ref{fig:method_qf_circ}(b).
Define $|x_i\rangle=|b^i_{k-1},\cdots,b^i_j,\cdots,b^i_{0}\rangle$ to be the $i^{th}$ state, where $b_j$ is a binary number and $x_i=\sum\nolimits_{\forall j}\{b^i_j\cdot 2^{j}\}$.
For two states $|x_p\rangle$ and $|x_q\rangle$, we define $|x_p\rangle\subseteq |x_q\rangle$ if $\forall b^p_j=1$, we have $b^q_j=1$. 
We define $sign(x_i)$ to be the sign of $x_i$. 
}

\begin{figure}[t]
\centering
\includegraphics[width=0.9\linewidth]{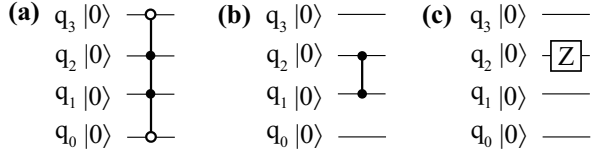}
\caption{\raggedr{\revision{Illustration of state $|6\rangle=|0110\rangle$ and $|4\rangle=|0100\rangle$ in a $k=4$ computation system: (a) $FG_6$; (b) $PG_6$; (c) $PG_4$.
}}}
\label{fig:gates}
\end{figure}

\revision{
Next, we define the gates to flip the sign of states.
The controlled $Z$ operation among $K$ qbits (e.g., $C^KZ$) is a quantum gate to flip the sign of states \cite{nielsen2002quantum,tacchino2019artificial}.
Define $FG_{x_i}$ to be a $C^KZ$ gate to flip the state $x_i$ only.
It can be implemented as follows: if $b^i_j=1$, the control signal of the $j^{th}$ qbit is enabled by $|1\rangle$, otherwise if $b^i_j=0$, it is enabled by $|0\rangle$.
Define $PG_{x_i}$ to be a controlled Z gate to flip all states $\forall x_m$ if $x_m\subseteq x_i$.
It can be implemented as follows: if $b^i_j=1$, there is a control signal of the $j^{th}$ qbit, enabled by $|1\rangle$, otherwise, it is not a control qbit.
Specifically, if there is only $b^i_{m}=1$ for all $b^i_k\in x_i$, we put a $Z$ gate on the $m^{th}$ qbit. 
Figure \ref{fig:gates} illustrates $FG_6$, $PG_6$ and $PG_4$.
We define cost function $\mathbb{C}$ to be the number of basic gates (e.g., Pauli-X, Toffoli) used in a control Z gate.
}

\revision{Now, we formally define the weight mapping problem as follows: Given (1) a vector $W=\{w_{m-1},\cdots,w_0\}$ with $m$ binary weights (i.e., -1 or +1) and (2) a computational basis of $k=log_2m$ qbits that include $m$ states $X=\{x_{m-1},\cdots,x_0\}$ and $\forall x_j\in X$, $sign(x_j)$ is $+$, the problem is to determine a set of gates $G$ in either $FG$ or $PG$ to be applied, such that the circuit cost 
is minimized, while the sign of each state is the same with the corresponding weight; i.e., $\forall x_j\in X$, $sign_G(x_j)=sign(w_j)$ and $min=\sum\nolimits_{g\in G}(\mathbb{C}(g))$, where $sign_G(x_j)$ is the sign of state $x_j$ under the computing conducted by a sequence of quantum gates in $G$.}

\revision{A straightforward way to satisfy the sign flip requirement without considering cost is to apply $FG$ for all states whose corresponding weights are $-1$.
A better solution for cost minimization is to use hypergraph states\cite{tacchino2019artificial}, which starts from the states with less $|1\rangle$, and apply $PG$ to reduce the cost.
However, as shown in the previous work, both methods have the cost complexity of $O(2^k)$, which is the same as classical computers and no quantum advantage can be achieved.}

\revision{Toward the quantum advantage, we made the following important observation: the order of weights can be adjusted, since matrix $MAT_u^\prime$ obtained by switching two rows in the unitary matrix $MAT_u$ will still be a unitary matrix. Based on this property, we can simplify the weight mapping problem to determine a set of gates, such that the cost is minimized while the number of states with sign flip is the same as the number of $-1$ in weight $W$.
On top of this, we propose an algorithm to guarantee the cost complexity to be $O(k^2)$.
Compared to $O(2^k)$ needed for classical computers, we can achieve quantum advantages.
To demonstrate how to guarantee the cost complexity to be $O(k^2)$, we first have the following theorem.}


\begin{theorem}
\label{thm:cover}
\revision{For an integer number $R$ where $R>0$ and $R\le2^{k-1}$,
the number $R$ can be expressed by a sequence of addition ($+$) or subtraction ($-$) of a subset of $S_k=\{2^{i}|0\le i<k\}$; when the terms of $S_k$ are sorted in a descending order, the sign of the expression (addition and subtraction) are alternative with the leading sign being addition ($+$).
}
\end{theorem}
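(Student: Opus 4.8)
The plan is to argue by strong induction on $k$, the key device being the reflection $R \mapsto 2^{k-1}-R$ about the midpoint of the interval $(0,2^{k-1}]$. Call a representation $R = 2^{a_1} - 2^{a_2} + 2^{a_3} - \cdots$ \emph{admissible} if $k > a_1 > a_2 > a_3 > \cdots \ge 0$ and the signs alternate starting with $+$; the goal is to produce an admissible representation for every integer $R$ with $0 < R \le 2^{k-1}$.

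For the base case $k=1$ the only admissible $R$ is $R = 1 = 2^0$, a one-term admissible representation. For the inductive step I would assume the statement for $k-1$, take $0 < R \le 2^{k-1}$, and split $(0,2^{k-1}]$ into three pieces. If $R \le 2^{k-2}$, the induction hypothesis at level $k-1$ already yields an admissible representation, all of whose exponents are $\le k-2 < k$. If $R = 2^{k-1}$, take the single term $+2^{k-1}$. If $2^{k-2} < R < 2^{k-1}$, set $R' = 2^{k-1} - R$, so that $0 < R' < 2^{k-2} = 2^{(k-1)-1}$, and apply the induction hypothesis to $R'$ to get $R' = 2^{b_1} - 2^{b_2} + 2^{b_3} - \cdots$ with $k-1 > b_1 > b_2 > \cdots$. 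Then
\[
R \;=\; 2^{k-1} - R' \;=\; 2^{k-1} - 2^{b_1} + 2^{b_2} - 2^{b_3} + \cdots ,
\]
where $b_1 \le k-2$ keeps the exponents strictly decreasing, and prefixing $+2^{k-1}$ while negating the expansion of $R'$ turns the sign pattern $(+,-,+,\dots)$ into $(+,-,+,-,\dots)$. This is admissible, closing the induction.

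The argument is essentially bookkeeping, so I do not anticipate a real difficulty; the places to be careful are that the three cases tile $(0,2^{k-1}]$ with no overlap or gap (watch the boundary values $2^{k-2}$ and $2^{k-1}$), that the inequality $R' < 2^{k-2}$ in the third case — exactly what makes the level-$(k-1)$ hypothesis applicable — really follows from $R > 2^{k-2}$, and that the sign flip preserves both the alternation and the leading $+$. The single idea worth isolating, and the closest thing to an obstacle, is the reflection step, which sends an $R$ lying just below $2^{k-1}$ (the awkward regime) to a small number already handled by induction. As an independent check I would note the combinatorial reading: group the $1$-bits of the binary expansion of $R$ into maximal runs; a run occupying bit positions $p$ through $q$ equals $2^{q+1} - 2^p$, so concatenating these two-term blocks over the runs from most to least significant gives directly a $+,-,+,-,\dots$ expansion, the only caveat being that when the topmost run is the lone bit $k-1$ (i.e.\ $R = 2^{k-1}$) it must be recorded as the single term $+2^{k-1}$ to stay inside $S_k$.
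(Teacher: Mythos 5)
Your proof is correct and follows essentially the same route as the paper's: induction on $k$ with the reflection $R=2^{k-1}-R'$ handling the large values, reducing to the hypothesis at level $k-1$. If anything, your version is slightly tighter at the boundaries (treating $R=2^{k-1}$ as its own case and explicitly verifying the sign alternation after negating the expansion of $R'$), points the paper's proof leaves implicit.
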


\begin{proof}
\revision{
The above theorem can be proved by induction.
First, for $k=2$, the possible values of $R$ are $\{1,2\}$, the set $S_2=\{2,1\}$. The theorem is obviously true.
Second, for $k=3$, the possible values of $R$ are $\{1,2,3,4\}$, and the set $S_3=\{4,2,1\}$.
In this case, only $R=3$ needs to involve 2 numbers from $S_3$ using the expression $3=4-1$; other numbers can be directly expressed by themselves.
So, the theorem is true.}

\revision{Third, assuming the theorem is true for $k=n-1$, we can prove that for $k=n$ the theorem is true for the following three cases.
Case 1: For $R<2^{n-2}$, since the theorem is true for $k=n-1$, based on the assumption, all numbers less than $2^{n-2}$ can be expressed by using set $S_{n-1}$ and thus we can also express them by using set $S_n$ because $S_{n-1}\subseteq S_n$;
Case 2: For $R=2^{n-2}$, itself is in set $S_{n}$;
Case 3: For $R>2^{n-2}$, we can express $R=2^{n-1}-T$, where $T=2^{n-1}-R< 2^{n-2}$. Since the theorem is true for $k=n-1$, we can express $T$ by using set $S_n-{2^{n-1}}=\{2^{i}|0\le i<n-1\}=S_{n-1}$; hence, $R$ can be expressed using set $S_n$.}

\revision{Above all, the theorem is correct.}
\end{proof}

\revision{We propose to only use $PG$ gate in a set $\{PG_{x(j)}\ |\ x(j)=2^j-1\ \&\ j\in[1,k]\}$ for any required number $R\in(0,2^{k})$ of sign flips on states; for instance, if $k=4$, the gate set is $\{PG_{0001}, PG_{0011}, PG_{0111}, PG_{1111}\}$ and $R\in(0,16)$.
This can be demonstrated using the above theorem and properties of the problem: (1) the problem has symmetric property due to the quadratic activation function. Therefore, the weight mapping problem can be reduced to find a set of gates leading to the number of $-1$ no larger than $2^{k-1}$; i.e., $R\in(0,2^{k-1}]$. (2) for $PG_{x(j)}$, it will flip the sign of $2^{k-j}$ states; since $j\in[1,k]$, the numbers of the flipped sign by these gates belong to a set $S_k=\{2^i|0\le i<k\}$;
These two properties make the problem in accordance with that in Theorem \ref{thm:cover}.
The weight mapping problem is also consistent with three rules in the theorem.
(1) A gate can be selected or not, indicating the finally determined gate set is the subset of $S_k$; (2) all states at the beginning have the positive sign, and therefore, the first gate will increase sign flips, indicating the leading sign is addition ($+$); (3) $\forall p<q$, $x(q)\subseteq x(p)$; it indicates that among the $2^{k-p}$ states whose signs are flipped by $x(p)$, there are $2^{k-q}$ states signs are flipped back; this is in accordance to alternatively use $+$ and $-$ in the expression in Theorem \ref{thm:cover}.
Followed by the proof procedure, we devise the following recursive algorithm to decide which gates to be employed.
}

\noindent\begin{minipage}{0.49\textwidth}
\removelatexerror
\begin{algorithm}[H]
\footnotesize
\KwInput{(1) An integer $R\in(0,2^{k-1}]$; (2) number of qbits $k$;}
\KwOutput{A set of applied gate $G$}
  void recursive($G$,$R$,$k$)\{\\
    \ \ \ \ \ \ \ \  if $(R<2^{k-2})$\{\\
    \ \ \ \ \ \ \ \ \ \ \ \ \ \ \ \  recursive($G$,$R$,$k-1$); // Case 1 in the third step\\
    \ \ \ \ \ \ \ \  \}\\
    \ \ \ \ \ \ \ \  else if $(R==2^{k-1})$\{\\
    \ \ \ \ \ \ \ \ \ \ \ \ \ \ \ \  $G.append(PG_{2^{k-1}})$; // Case 2 in the third step\\
    \ \ \ \ \ \ \ \ \ \ \ \ \ \ \ \  return;\\
    \ \ \ \ \ \ \ \  \}else\{\\
    \ \ \ \ \ \ \ \ \ \ \ \ \ \ \ \  $G.append(PG_{2^{k-1}})$;\\
    \ \ \ \ \ \ \ \ \ \ \ \ \ \ \ \  recursive($G$,$2^{k-1}-R$,$k-1$); // Case 3 in the third step\\
    \ \ \ \ \ \ \ \  \}\\
  \} \\
  // Entry of weight mapping algorithm\\
  \textit{set} main($R$,$k$)\{\\
    \ \ \ \ \ \ \ \  Initialize empty \textit{set} $G$;\\
    \ \ \ \ \ \ \ \  recursive($G$,$R$,$k$);\\
    \ \ \ \ \ \ \ \  return $G$\\
  \}
 \caption{\revision{\qfmap: weight mapping algorithm}}
\end{algorithm}
\end{minipage}

\revision{In the above algorithm, the worst case for the cost is that we apply all gates in $\{PG_{x(j)}\ |\ x(j)=2^j-1\ \&\ j\in[1,k]\}$.
Let the state $x(j)$ has $y$ $|1>$ states, if $y>2$ the $PG_{x(j)}$ can be implemented using $2y-1$ basic gates, including $y-1$ Toffoli gates for controlling, 1 control Z gate, and $y-1$ Toffoli gates for resetting; otherwise, it uses 1 basic gates.
Based on these understandings, we can calculate the cost complexity in the worst case, which is $1+1+3+\cdots+(2\times k-1)=k^2+1$.
Therefore, the cost complexity of linear function computation is $O(k^2)$.
The quadratic activation function is implemented by a $C^kZ$ gate, whose cost is $O(k)$.
Thus, the cost complexity for neural computation {U}\qfnc{R} is $O(k^2)$.}

\revision{Finally, to make the functional correctness, in generating the inputs unitary matrix, we swap rows in it in terms of the weights, and store the generated results in quantum memory.}








\textbf{Mapping \qfcirc~to physic qbits:}
After \qfcirc~is generated, the second step is to map \qfcirc~to quantum processors, called virtual-to-physic mapping. 
In this paper, we deploy \qfcirc~to various IBM quantum  processors.
Virtual-to-physic mapping in \qfmap~has two tasks: (1) select a suitable quantum processor backend, and (2) map  qbits in \qfsnet{s}~to physic qbits in the selected backend.
For the first task, \qfmap~will \textit{i)} check the number of qbits needed; \textit{ii)} find the backend with the smallest number of qbit to accommodate \qfcirc; \textit{iii)} for the backends with the same number of qbits, \qfmap~will select a backend for the minimum average error rate.
The second task in \qfmap~is to map qbits in \qfsnet{s}~to physic qbits. 
The mapping follows two rules: (1) the qbit in \qfsnet{s}~with more gates is mapped to the physic qbit with a lower error rate; and (2)  qbits in \qfsnet{s}~with connections are mapped to the physic qbits with the smallest distance.

\section*{Data availability}
The authors declare that all data supporting the findings of this study are available within the article and its Supplementary Information files. Source data can be accessed via \url{https://wjiang.nd.edu/categories/qf/}.

\section*{Code availability}
All relevant codes will be open in github upon the acceptance of the manuscript or be available from the corresponding authors upon reasonable request.

\small{
\bibliography{sample}
}


\section*{Acknowledgements}
This work is partially supported by IBM
and University of Notre Dame (IBM-ND) Quantum program.


\section*{Author contributions statement}
W.J. conceived the idea and performed quantum evaluations; J.X. and Y.S. supervised the work and improved the idea and experiment design. All authors contributed to manuscript writing and discussions about the results.

\clearpage

\end{document}